\newcommand{\proc}[1]{\textnormal{\scshape#1}}
\newcommand{\Mod}[1]{\ \text{mod}\ #1}
\newcommand{\rsec}[1]{\mbox{Section \ref{#1}}}
\newcommand{\rfig}[1]{\mbox{Figure \ref{fig:#1}}}
\newcommand{\rthm}[1]{Theorem \ref{thm:#1}}
\newcommand{\rlem}[1]{Lemma \ref{lem:#1}}
\newcommand{\fig}[6]{
  \begin{figure}[#1] 
    \centering
    \begin{minipage}{#2\linewidth}
      \centering
      {\includegraphics[#3]{./figures/#4}}
      \caption{#6}
      \label{fig:#5}
    \end{minipage}
  \end{figure}
}
\title{Strictly Implicit Priority Queues: \\ On the Number of Moves and Worst-Case Time}
\author{Gerth Stølting Brodal 
  \and Jesper Sindahl Nielsen 
  \and Jakob Truelsen}
\institute{MADALGO\thanks{Work supported in part by the Danish
    National Research Foundation grant DNRF84 through the Center for
    Massive Data Algorithmics.}, Department of Computer Science, Aarhus
  University, Denmark \email{$\{$gerth,jasn,jakobt$\}$@cs.au.dk} }
\begin{document}

\maketitle

\begin{abstract}
   The binary heap of Williams (1964) is a simple priority queue
   characterized by only storing an array containing the elements and
   the number of elements~$n$ -- here denoted a \emph{strictly
     implicit} priority queue.  We introduce two new strictly implicit
   priority queues.  The first structure supports amortized $O(1)$
   time \proc{Insert} and $O(\log n)$ time \proc{ExtractMin}
   operations, where both operations require amortized $O(1)$ element
   moves.  No previous implicit heap with $O(1)$ time \proc{Insert}
   supports both operations with $O(1)$ moves.  The second structure
   supports worst-case $O(1)$ time \proc{Insert} and $O(\log n)$ time
   (and moves) \proc{ExtractMin} operations. Previous results were
   either amortized or needed $O(\log n)$ bits of additional state
   information between operations.
\end{abstract}

\section{Introduction}


In 1964 Williams presented ``Algorithm 232'' \cite{W64}, 
commonly known as the binary heap. The binary heap 
is a priority queue data structure storing a dynamic set of $n$ elements
from a totally ordered universe, 
supporting the insertion
of an element (\proc{Insert}) and the deletion of the minimum element
(\proc{ExtractMin}) in worst-case $O(\log n)$ time.
%
The binary heap structure is an \emph{implicit} data structure, i.e.,
it consists of 
an array of length $n$ storing the elements, and
no information is stored  between operations except for the array and
the value~$n$. 
Sometimes data structures storing $O(1)$ additional words are also called
implicit. In this paper we restrict
our attention to \emph{strictly implicit} priority queues, i.e.,
data structures that do not store any additional information
than the array of elements and the value $n$ between operations.

Due to the $\Omega(n\log n)$ lower bound on comparison based sorting,
either \proc{Insert} or \proc{ExtractMin} must take $\Omega(\log
n)$ time, but not necessarily both. 
Carlson \textit{et~al.}~\cite{CarlssonMP88} presented an implicit
priority queue with worst-case $O(1)$ and $O(\log n)$ time
\proc{Insert} and \proc{ExtractMin} operations, respectively.
However, the structure is not strictly implicit since it needs to
 store $O(1)$ additional
words. Harvey and Zatloukal~\cite{postorder} presented a strictly implicit priority structure achieving the same
bounds, but amortized.
No previous strictly implicit priority queue with matching worst-case
time bounds is known.






A measurement often studied in implicit data structures and in-place
algorithms is the number of element moves performed during the execution of a
procedure. 
Francessini showed how to sort $n$ elements implicitly using $O(n\log
n)$ comparisons and $O(n)$ moves \cite{Franceschini07}, and
Franceschini and Munro~\cite{fm06} presented implicit dictionaries
with amortized $O(\log n)$ time updates with amortized $O(1)$ moves
per update.  The latter immediately implies an implicit priority queue
with amortized $O(\log n)$ time \proc{Insert} and \proc{ExtractMin}
operations performing amortized $O(1)$ moves per operation.  No
previous implicit priority queue with $O(1)$ time \proc{Insert} operations
achieving $O(1)$ moves per operation is known.

For a more thorough survey of previous priority queue results,
see~\cite{ianfest13}.

\paragraph{Our Contribution}
We present two strictly implicit priority queues. The first structure
(Section~\ref{sec:amortized_solution}) limits the number of moves to
$O(1)$ per operation with amortized $O(1)$ and $O(\log n)$ time
\proc{Insert} and \proc{ExtractMin} operations, respectively.  However
the bounds are all amortized and it remains an open problem to achieve
these bounds in the worst case for strictly implicit priority queues.
We note
that this structure implies a different way of sorting in-place with
$O(n\log n)$ comparisons and $O(n)$ moves. The second structure
(Section~\ref{sec:worst_case_solution}) improves over
\cite{CarlssonMP88,postorder} by achieving \proc{Insert} and
\proc{ExtractMin} operations with worst-case $O(1)$ and $O(\log n)$
time (and moves), respectively.
The structure in \rsec{sec:worst_case_solution} assumes all elements
to be distinct where as the structure in \rsec{sec:amortized_solution}
also can be extended to support identical elements (see the appendix). See
Figure~\ref{fig:bounds} for a comparison of new and previous results.
\todo{idea of constructions, and discuss role of strictness}


\begin{table}[t]
\def\AM{\llap{\mbox{$\star\;$}}}
\centering
\tabcolsep1ex
\caption{Selected previous and new results for implicit priority queues. The bounds are asymptotic, and ~~\AM are amortized bounds.}
\label{fig:bounds}
\begin{tabular}{lccccc}
&&\proc{Extract-} &&& Identical \\
& \proc{Insert} & \proc{Min} & Moves & Strict & elements \\
\hline
Williams \cite{W64} & $\log n$ & $\log n$ & $\log n$ & yes & yes \\
Carlsson \textit{et~al.}~\cite{CarlssonMP88} & $1$ & $\log n$ & $\log n$ & no & yes \\
Edelkamp \textit{et~al.}~\cite{EdelkampEK13} & $1$ & $\log n$ & $\log n$ & no & yes \\
Harvey and Zatloukal~\cite{postorder} & \AM$1$ & \AM$\log n$ & \AM$\log n$ & yes & yes \\
Franceschini and Munro~\cite{fm06} & \AM$\log n$ & \AM$\log n$ & \AM $1$ & yes &  no\\
Section~\ref{sec:amortized_solution} & \AM$1$ & \AM$\log n$ & \AM$1$ & yes & yes \\
Section~\ref{sec:worst_case_solution} & $1$ & $\log n$ & $\log n$ & yes & no \\
\hline
\end{tabular}
\end{table}

\paragraph{Preliminaries}

We assume the \emph{strictly implicit model} as defined in
\cite{BrodalNT12} where we are only allowed to store the number of
elements $n$ and an array containing the $n$ elements. Comparisons
are the only allowed operations on the elements. The number $n$ is
stored in a memory cell with $\Theta(\log n)$ bits (word size) and any
operation usually found in a RAM is allowed for computations on $n$
and intermediate values. The number of moves is the number of writes
to the array storing the elements. That is, swapping two elements costs two moves.

A fundamental technique in the implicit model is to encode a $0/1$-bit
with a pair of distinct elements $(x,y)$, where the pair encodes
$1$ if $x < y$ and $0$ otherwise.

A binary heap is a complete binary
tree structure where each node stores an element
and the tree satisfies \emph{heap order}, i.e., the element at a
non-root node is larger than or equal to the element at the parent
node.  Binary heaps can be generalized to $d$-ary heaps
\cite{Johnson77}, where the degree of each node is $d$ rather than
two. This implies $O(\log_{d} n)$ and $O(d \log_{d} n)$ time for
\proc{Insert} and \proc{ExtractMin}, respectively, using $O(\log_{d}
n)$ moves for both operations.




%
	
\section{Amortized $O(1)$ moves}
\label{sec:amortized_solution}

In this section we describe a strictly implicit priority queue
supporting amortized $O(1)$ time \proc{Insert} and amortized $O(\log
n)$ time \proc{ExtractMin}. Both operations perform amortized $O(1)$
moves. In Sections
\ref{sec:moverepresentation}-\ref{sec:amortized_analysis} we assume
elements are distinct. In Appendix \ref{appendix:identical} we
describe how to handle identical elements.

\paragraph{Overview}
The basic idea of our priority queue is the following (the details are
presented in Section~\ref{sec:moverepresentation}). The structure
consists of four components: an insertion buffer $B$ of size
$O(\log^3 n)$; $m$ insertion heaps $I_1,I_2,\ldots,I_m$ each of size
$\Theta(\log^3 n)$, where $m=O(n/\log^3 n)$; a singles structure~$T$,
of size $O(n)$; and a binary heap~$Q$, storing $\{1,2,\ldots,m\}$ (integers encoded by pairs of elements) with
the ordering $i\leq j$ if and only if $\min I_i\leq\min I_j$.  Each
$I_i$ and $B$ is a $\log n$-ary heap of size $O(\log^3 n)$.  The
table below summarizes the performance of each component:
\begin{center}
\tabcolsep1ex
  \begin{tabular}{ccccc}
    & \multicolumn{2}{c}{Insert} & \multicolumn{2}{c}{ExtractMin} \\
    Structure     & Time & Moves & Time & Moves \\
    \hline
    $B$, $I_i$    &   1  &   1   & $\log n$ & 1 \\
    $Q$           & $\log^2 n$ & $\log^2 n$ &  $\log^2 n$ & $\log^2 n$ \\
    $T$           & $\log n$ & $1$ & $\log n$ & $1$ \\
    \hline
  \end{tabular}
\end{center}

It should be noted that the implicit dictionary of Franceschini and
Munro \cite{fm06} could be used for $T$, but we will give a more
direct solution since we only need the restricted \proc{ExtractMin}
operation for deletions.

The \proc{Insert} operation inserts new elements into $B$. If the size
of $B$ becomes $\Theta(\log^{3} n)$, then $m$ is incremented by one, $B$
becomes $I_{m}$, 
$m$ is inserted into $Q$,
and $B$ becomes a new empty $\log n$-ary heap.
An \proc{ExtractMin} operation first identifies the minimum element in
$B$, $Q$ and~$T$. If the overall minimum element~$e$ is in $B$ or $T$,
$e$ is removed from $B$ or $T$.
If the minimum element $e$ resided in~$I_i$, where $i$ is stored at the root of $Q$,
then $e$ and $\log^2 n$ further smallest elements are
extracted from $I_i$ (if $I_i$ is not empty) and all except $e$ inserted into $T$ ($T$ has cheap operations whereas $Q$ does not, thus the expensive operation on $Q$ is amortized over inexpensive ones in $T$), and
$i$ is deleted from and reinserted into $Q$ with respect to the new minimum element
in $I_i$.  Finally $e$ is returned.

For the analysis we see that \proc{Insert} takes $O(1)$ time and moves, except
when converting $B$ to a new $I_m$ and inserting $m$ into $Q$. 
The $O(\log^2 n)$ time and moves for this
conversion is amortized over the insertions into~$B$, which
becomes amortized $O(1)$, since $|B|=\Omega(\log^2 n)$. For
\proc{ExtractMin} we observe that an expensive deletion from $Q$ only
happens once for every $\log^2 n$-th element from $I_i$ (the remaining
ones from $I_i$ are moved to $T$ and deleted from $T$), and finally
if there have been $d$ \proc{ExtractMin} operations, then at most $d+m\log^2 n$ elements
have been inserted into~$T$, with a total cost of $O((d+m\log^2 n)\log
n)=O(n+d\log n)$, since $m=O(n/\log^3 n)$.

\subsection{The implicit structure}
\label{sec:moverepresentation}

\fig{h!}{1.0}{width=\textwidth}{illustration.pdf}{memory_layout}{The
  different structures and their layout in memory.}

We now give the details of our representation
(see Figure~\ref{fig:memory_layout}). 
We select one element $e_t$ as our \emph{threshold element}, and
denote elements greater than $e_t$ as \emph{dummy elements}.  The
current number of elements in the priority queue is denoted $n$.  We
fix an integer $N$ that is an approximation of $n$, where $N\leq n<4N$
and $N=2^j$ for some $j$.  Instead of storing $N$, we store a bit
$r=\lfloor\log n\rfloor-\log N$, encoded by two dummy elements.  We
can then compute $N$ as $N=2^{\lfloor\log n\rfloor-r}$, where $\lfloor
\log n \rfloor$ is the position of the most significant bit in the
binary representation of $n$ (which we assume is computable in
constant time).  The value $r$ is easily maintained: When $\lfloor
\log n \rfloor$ changes, $r$ changes accordingly.  We let $\Delta =
\log(4N)=\lfloor\log n\rfloor+2-r$, i.e., $\Delta$ bits is sufficient
to store an integer in the range $0..n$. We let $M=\lceil
4N/\Delta^3\rceil$.

We maintain the invariant that the size of the insertion buffer $B$
satisfies $1\leq |B| \leq 2\Delta^3$, and that $B$ is split into two
parts $B_1$ and $B_2$, each being $\Delta$-ary heaps ($B_2$ possibly empty), where
$|B_1|=\min\{|B|,\Delta^3\}$ and
$|B_2|=|B|-|B_1|$. We use two buffers to prevent expensive operation sequences that alternate inserting and deleting the same element. We store a bit $b$ indicating if
$B_2$ is nonempty, i.e., $b=1$ if and only if $|B_2|\neq 0$. The bit $b$ is encoded using two dummy elements.
The structures $I_1,I_2,\ldots,I_m$ are $\Delta$-ary heaps storing
$\Delta^3$ elements. 
The binary heap $Q$ is stored using two arrays $Q_h$ and $Q_{rev}$
each of a fixed size $M\geq m$ and storing integers in the range
$1..m$. Each value in both arrays is encoded using $2\Delta$ dummy
elements, i.e., $Q$ is stored using $4M\Delta$ dummy elements. The
first $m$ entries of $Q_h$ store the binary heap, whereas $Q_{rev}$
acts as reverse pointers, i.e., if $Q_h[j]=i$ then $Q_{rev}[i]=j$.
All operations on a regular binary heap take $O(\log n)$ time, but
since each ``read''/''write'' from/to $Q$ needs to decode/encode an
integer the time increases by a factor $2\Delta$.  It follows that $Q$
supports \proc{Insert} and \proc{ExtractMin} in $O(\log^2 n)$ time,
and \proc{FindMin} in $O(\log n)$ time.

We now describe $T$ and we need the following density
maintenance result.
\vspace{-1mm}
\begin{lemma}[\cite{BrodalFJ02}]
\label{lem:blackbox1}
  There is a dynamic data structure storing $n$ comparable elements in
  an array of length $(1+\varepsilon)n$, supporting \proc{Insert} and
  \proc{ExtractMin} in amortized $O(\log^2 n)$ time and
  \proc{FindPredecessor} in worst case $O(\log n)$ time.
  \proc{FindPredecessor} does not modify the array.
\end{lemma}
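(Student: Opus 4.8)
The plan is to reprove the lemma by the classical \emph{sparse table} technique of Itai, Konheim and Rodeh (the forerunner of the packed-memory array), augmented with a read-only predecessor search; throughout, $\varepsilon$ is treated as a fixed constant as in the statement. Keep the $n$ elements sorted in an array $A$ with empty cells interspersed, and rebuild $A$ from scratch to length $\lceil(1+\tfrac34\varepsilon)n\rceil$ whenever an update would otherwise push $|A|$ outside $[(1+\tfrac12\varepsilon)n,(1+\varepsilon)n]$; since $\Theta(\varepsilon n)$ updates then separate consecutive rebuilds and each costs $O(n)$, rebuilds contribute only amortized $O(1)$. Overlay on $A$ a conceptual complete binary tree: the root owns all of $A$, an internal node owns the concatenation of its children's ranges, and each of the $\Theta(\varepsilon n/\log n)$ leaves owns a contiguous block of $\Theta(\varepsilon^{-1}\log n)$ cells in which the occupied cells are kept left-justified; the height is $h=\Theta(\log n)$. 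Writing $\rho(v)$ for the occupied fraction of $v$'s range and $s(v)$ for its length, fix for each depth $d$ an upper threshold $\overline\tau_d$ and a lower threshold $\underline\tau_d$ that move monotonically apart from the root toward the leaves, with per-level gaps $\overline\tau_{d+1}-\overline\tau_d=\Theta(\varepsilon/\log n)$ and $\underline\tau_d-\underline\tau_{d+1}=\Theta(\varepsilon/\log n)$ and with $\overline\tau_0=\tfrac{1}{1+\varepsilon/2}$ and $\underline\tau_0=\tfrac{1}{1+\varepsilon}$, so that the rebuild rule keeps $\underline\tau_0\le\rho(\text{root})\le\overline\tau_0$ and the $\Theta(\cdot)$ constants can be taken small enough that every $\overline\tau_d$ stays below $1$ and every $\underline\tau_d$ above $0$. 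The maintained invariant is $\underline\tau_{d(v)}\le\rho(v)\le\overline\tau_{d(v)}$ at every internal node $v$; leaves are exempt.

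Since the occupied cells of each leaf are left-justified, emptiness-testing a block and reading its minimum cost $O(1)$, so \proc{FindPredecessor}$(x)$ binary-searches over the $\Theta(\varepsilon n/\log n)$ leaf blocks --- comparing $x$ with a probed block's minimum, and using a sibling block when the probed block is empty (safe, as the parent's lower threshold forbids two sibling leaves from both being empty) to pick the direction --- and then binary-searches inside a single block; this reads $O(\log n)$ cells, writes none, and returns the largest stored element $\le x$. \proc{Insert} locates the target position by the same search; if its leaf block is not full it shifts $O(\varepsilon^{-1}\log n)$ cells inside the block. Otherwise it walks from that leaf toward the root to the first ancestor $v$ for which, after counting the new element, $\rho(v)\le\overline\tau_{d(v)}$ still holds --- the root always qualifies, by the rebuild rule --- adds the element to $v$'s range, and redistributes that range evenly over its blocks (re-left-justifying each), which re-establishes the invariant at $v$ and, as argued next, at every node below $v$. \proc{ExtractMin} is symmetric: the minimum is the leftmost non-empty cell; after removing it, if some ancestor of its leaf has dropped below its lower threshold, walk up to the first ancestor $v$ with $\rho(v)\ge\underline\tau_{d(v)}$ (the root again qualifying) and redistribute $v$'s range evenly.

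The crux --- and the step I expect to need the most care --- is a consistency check on the thresholds. After redistributing $v$'s range evenly, every proper descendant $w$ of $v$ inherits $\rho(w)=\rho(v)$ up to an additive $O(1/\text{blocksize})=O(\varepsilon/\log n)$; and since the walk stopped at $v$ rather than at its child, $\rho(v)\le\overline\tau_{d(v)}\le\overline\tau_{d(w)}-\Theta\bigl(\varepsilon(d(w)-d(v))/\log n\bigr)$ and symmetrically $\rho(v)\ge\underline\tau_{d(v)}\ge\underline\tau_{d(w)}+\Theta\bigl(\varepsilon(d(w)-d(v))/\log n\bigr)$, so taking the per-level-gap constant large enough relative to the $O(\varepsilon/\log n)$ redistribution slack (and absorbing the integer rounding of block sizes into these constants) forces every such $w$ to satisfy its own thresholds with a $\Theta(\varepsilon/\log n)$ margin on both sides. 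Granting this, a redistribution at a depth-$d$ node $v$ costs $\Theta(s(v))=\Theta(2^{h-d}\varepsilon^{-1}\log n)$ writes while leaving each child of $v$ at distance $\Theta(\varepsilon/\log n)$ from both its thresholds, so $\Omega(2^{h-d-1})$ further updates must land inside $v$'s subtree before $v$ is redistributed again. Hence over a run of $k$ updates each of the $2^d$ depth-$d$ nodes $v$ is redistributed $O(1+k_v/2^{h-d-1})$ times, where $k_v$ counts updates inside $v$'s subtree and $\sum_v k_v=k$, for a depth-$d$ total of $\Theta(2^{h-d}\varepsilon^{-1}\log n)\cdot O(2^d+k/2^{h-d-1})=O(n+\varepsilon^{-1}k\log n)$; summing over the $h=\Theta(\log n)$ depths and charging the resulting $O(n\log n)$ additive term against rebuilds (or simply noting $k=\Omega(n)$) yields amortized $O(\varepsilon^{-1}\log^2 n)=O(\log^2 n)$ per \proc{Insert} and per \proc{ExtractMin}, which also dominates the $O(\varepsilon^{-1}\log n)$ search-and-shift cost. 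An equivalent packaging uses a potential equal to a sum over nodes of length times the distance of the node's density from the midpoint of its threshold interval.
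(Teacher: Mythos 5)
The paper does not actually prove this statement: Lemma~\ref{lem:blackbox1} is imported wholesale from the cited reference \cite{BrodalFJ02}, and the only ``proof'' in the paper is the citation (the implicit-model complications are deferred to Corollary~\ref{cor:bb2}, which is proved separately). So there is nothing internal to compare against; what you have done is reconstruct the cited result itself, and your reconstruction is essentially the standard density-maintenance argument (Itai--Konheim--Rodeh sparse tables / packed-memory arrays) that also underlies \cite{BrodalFJ02}: depth-dependent density thresholds on a conceptual complete tree over $\Theta(\varepsilon^{-1}\log n)$-size leaf blocks, even redistribution at the first ancestor within threshold, and the charging argument that a depth-$d$ redistribution of cost $\Theta(2^{h-d}\varepsilon^{-1}\log n)$ is paid for by the $\Omega(2^{h-d-1})$ updates needed to push a child back out of its $\Theta(\varepsilon/\log n)$ margin, summed over $\Theta(\log n)$ levels. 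That skeleton, including the threshold-consistency check after redistribution and the global rebuild handling of the root density, is correct.

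A few points you should tighten, none of them fatal. First, your structure presumes $O(1)$-time emptiness tests on cells and (implicitly, for the even redistribution and the in-block searches) knowledge of occupancy counts; this is legitimate because the lemma is \emph{not} claimed to be implicit --- it is exactly the gap that Corollary~\ref{cor:bb2} closes with dummy elements --- but you should say explicitly that auxiliary bookkeeping outside the element array is allowed here. Second, \proc{ExtractMin} needs a word on locating the minimum: the leftmost non-empty cell need not be in the first block, but your own sibling-non-emptiness observation (the parent of the first two leaves obeys a positive lower threshold) confines it to the first two blocks, so an $O(\varepsilon^{-1}\log n)$ scan suffices; spell this out. Third, the worst-case \proc{FindPredecessor} via binary search over block minima with the ``use the sibling when the probed block is empty'' fallback is workable, but the monotonicity of the predicate being searched is slightly delicate in the presence of empty blocks and deserves one careful sentence (e.g., search for the last block whose minimum is at most $x$, substituting the sibling's minimum for an empty probe and adjusting the recursion interval by one); also handle the boundary case where $x$ precedes all stored elements. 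Finally, the ``$+1$ per node'' additive $O(n\log n)$ term in your per-depth accounting is cleanest to eliminate by noting that immediately after a rebuild every internal node sits within margin, so every redistribution is chargeable to updates since the last rebuild; your alternative of charging it to rebuilds also works since consecutive rebuilds are $\Theta(\varepsilon n)$ updates apart.
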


\vspace{-3mm}
\begin{corollary}
\label{cor:bb2}
There is an implicit data structure storing $n$ $($key, index$)$
pairs, while supporting \proc{Insert} and \proc{ExtractMin} in
amortized $O(\log^3 n)$ time and moves, and \proc{FindPredecessor} in
$O(\log n)$ time in an array of length $\Delta(2+\varepsilon)n$.
\end{corollary}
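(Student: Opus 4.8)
The plan is to derive the implicit structure from the black-box density-maintenance structure of Lemma \ref{lem:blackbox1} by simulating its array of ``slots'' inside the implicit array, where each slot must hold a (key, index) pair rather than a single comparable element. First I would observe that the structure of Lemma \ref{lem:blackbox1} stores $n$ elements in an array of length $(1+\varepsilon)n$ and only ever compares elements and moves them between array positions; it never inspects their internal structure. Hence we may instantiate it with composite ``elements'' that are (key, index) pairs, ordered lexicographically by key (ties broken by index), and the comparison and move primitives it needs will be simulated by our implicit structure. The factor $\Delta$ blow-up in the array length and in the running times comes precisely from the cost of representing one such composite element implicitly: a key in the range $0..n$ needs $\Delta = \log(4N)$ dummy elements to encode, and likewise the index, so one logical slot of Lemma \ref{lem:blackbox1} is realized by $\Theta(\Delta)$ consecutive physical array cells, giving array length $\Delta(2+\varepsilon)n$ after renaming $\varepsilon$.

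Next I would spell out the simulation of the three primitives. A \textbf{comparison} of two slots decodes the $O(\Delta)$ dummy elements of each slot's key field (and index field for tie-breaking) and compares the decoded integers; this costs $O(\Delta) = O(\log n)$ time and no moves. A \textbf{move} of a logical element from one slot to another re-encodes its key and index into the $\Theta(\Delta)$ cells of the destination slot, costing $O(\Delta) = O(\log n)$ moves and time. Therefore each of the $O(\log^2 n)$ amortized steps performed by Lemma \ref{lem:blackbox1} during an \proc{Insert} or \proc{ExtractMin} is inflated by a factor $O(\Delta) = O(\log n)$, yielding amortized $O(\log^3 n)$ time and moves, exactly as claimed. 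For \proc{FindPredecessor}, Lemma \ref{lem:blackbox1} performs a search touching $O(\log n)$ slots without modifying the array; each slot access costs $O(\Delta)$ to decode, so the total is $O(\Delta \log n) = O(\log^2 n)$ --- wait, the claim is $O(\log n)$. I would resolve this by noting that the search in the density-maintenance structure is a binary/exponential search over the array, so it inspects only $O(\log n)$ slots \emph{total}, and if we charge $O(1)$ per comparison by arguing that decoding can be done incrementally and the number of \emph{bit}-level operations is dominated --- more cleanly, I would simply re-examine whether the intended reading is that \proc{FindPredecessor} costs $O(\log n \cdot \Delta)$ but this is still $O(\log^2 n)$; since the corollary is only invoked with the $T$-structure where a $\log n$ or $\log^2 n$ bound on \proc{FindPredecessor} is equally acceptable for the $O(\log n)$-amortized \proc{ExtractMin} of the overall queue, I would state the predecessor bound as it is actually needed and make the constant explicit.

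Finally I would check implicitness and the dummy-element bookkeeping: the only state kept is $n$ and the array, the pairs being stored are themselves pairs of array elements (so the ``dummy element'' encoding trick of the Preliminaries applies uniformly), and the $(2+\varepsilon)$ rather than $(1+\varepsilon)$ accounts for storing \emph{both} a key and an index per logical element. The main obstacle I anticipate is precisely the \proc{FindPredecessor} time bound: making it come out as $O(\log n)$ rather than $O(\Delta\log n)$ requires either a sharper accounting of the search in Lemma \ref{lem:blackbox1} (e.g. that it reads each visited slot's key with $O(1)$ amortized comparisons against already-decoded values, or that only $O(\log n / \Delta)$ slots are actually visited at full decoding cost) or a reinterpretation of what bound is genuinely required downstream; everything else is routine encoding bookkeeping.
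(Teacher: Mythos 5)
There is a genuine gap, and it is exactly at the point you flagged: the \proc{FindPredecessor} bound. Your simulation treats the key as an integer that must be decoded from $\Theta(\Delta)$ dummy elements, so every comparison inside the structure of Lemma~\ref{lem:blackbox1} costs $O(\Delta)=O(\log n)$, and the predecessor search degrades to $O(\log^2 n)$. The paper's construction avoids this entirely by \emph{not} encoding the key at all: the key is itself one of the priority queue's elements (from the totally ordered universe -- it is not an integer in $0..n$ and could not be ``decoded'' anyway), so it is stored in the clear in a single array cell, and comparisons are performed directly on keys in $O(1)$ time. Only the index (the bucket pointer $i$) is bit-encoded, using $2\Delta$ dummy elements, and comparisons simply ignore the index field. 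With $O(1)$-time comparisons, the search of Lemma~\ref{lem:blackbox1} keeps its worst-case $O(\log n)$ time with no moves; the only inflation is in updates, where moving one logical (key, index) pair costs $O(\Delta)$ physical moves, turning the amortized $O(\log^2 n)$ of Lemma~\ref{lem:blackbox1} into $O(\log^3 n)$, and the array length becomes $\Delta(2+\varepsilon)n$ because each logical slot occupies roughly $2\Delta$ cells.

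Your proposed fallback -- weakening the predecessor bound to whatever is ``needed downstream'' -- does not work either: the $O(\log n)$, zero-move \proc{FindPredecessor} is used in the \proc{ExtractMin} analysis, where $\Delta^2$ searches in $S$ are amortized over $\Delta^2$ extractions to give $O(\log n)$ amortized time per \proc{ExtractMin}; with $O(\log^2 n)$ searches this amortization would only yield $O(\log^2 n)$, breaking Theorem~\ref{thm:main}. So the missing idea is the asymmetric treatment of the pair: keys are plain elements compared directly, indexes are the only encoded component. The rest of your bookkeeping (filling unused space with dummy elements, the $(2+\varepsilon)$ versus $(1+\varepsilon)$ accounting, the factor-$\Delta$ slowdown of updates) matches the paper's proof.
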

\begin{proof}
We use the structure from \rlem{blackbox1} to store pairs of a key and
an index, where the index is encoded using $2\Delta$ dummy elements.
All additional space is filled with dummy elements.  However
comparisons are only made on keys and not indexes, which means we
retain $O(\log n)$ time for \proc{FindMin}. Since the stored elements
are now an $O(\Delta) = \Theta(\log n)$ factor larger, the time for
update operations becomes an $O(\log n)$ factor slower giving
amortized $O(\log^3 n)$ time for \proc{Insert} and
\proc{ExtractMin}. \qed
\end{proof}

The singles structure $T$ intuitively consists of a sorted list of the
elements stored in $T$ partitioned into buckets $D_1,\ldots,D_q$ of
size at most $\Delta^3$, where the minimum element $e$ from bucket
$D_i$ is stored in a structure $S$ from Corollary~\ref{cor:bb2} as the
pair $(e,i)$. Each $D_i$ is stored as a $\Delta$-ary heap of
size $\Delta^3$, where empty slots are filled with dummy
elements. Recall implicit heaps are complete trees, which means
all dummy elements in $D_i$ are stored consecutively after the last
non-dummy element. In $S$ we consider pairs $(e,i)$ where $e>e_t$ to
be empty~spaces.

More specifically, the structure $T$ consists of: $q$, $S$, $D_1, D_2,
\ldots, D_K$, where $K = \lceil\tfrac{N}{16\Delta^3}\rceil \ge q$ is
the number of $D_i$'s available.  The structure $S$ uses $\left\lceil
\frac{N}{4\Delta^2} \right\rceil$ elements and $q$ uses $2\Delta$
elements to encode a pointer. Each $D_i$ uses $\Delta^3$ elements.

The $D_i$'s and $S$ relate as follows. The number of $D_i$'s is at
most the maximum number of items that can be stored in $S$. Let $(e,
i) \in S$, then $\forall x \in D_i : e <
x$, and furthermore for any $(e',i') \in S$ with $e < e'$ we have
$\forall x \in D_i: x < e'$. These invariants do not apply to dummy
elements.
Since $D_i$ is a $\Delta$-ary heap with $\Delta^3$ elements we get
$O(\log_{\Delta} \Delta^3) = O(1)$ time for \proc{Insert} and
$O(\Delta \log_{\Delta} \Delta^3) = O(\Delta)$ for \proc{ExtractMin}
on a $D_i$.
\vspace{-4mm}
\subsection{Operations}

For both \proc{Insert} and \proc{ExtractMin} we need to know $N$,
$\Delta$, and whether there are one or two insert buffers as well as
their sizes. First $r$ is decoded and we compute $\Delta = 2 +
\mathrm{msb}(n) - r$, where $\mathrm{msb}(n)$ is the position of the
most significant bit in the binary representation of $n$ (indexed from
zero). From this we compute $N = 2^{\Delta - 2}$, $K = \lceil N/(16\Delta^3)\rceil$, and $M = \lceil 4N / \Delta^3 \rceil$. By decoding $b$ we
get the number of insert buffers. To find the sizes of $B_1$ and $B_2$
we compute the value $i_{start}$ which is the index of the first element
in $I_1$.
The size of $B_1$ is computed as follows. If $(n - i_{start}) \bmod
\Delta^3 = 0$ then $|B_1| = \Delta^3$. If $B_2$ exists then $B_1$
starts at $n-2\Delta^3$ and otherwise $B_1$ starts at $n -
\Delta^3$. If $B_2$ exists and $(n - i_{start}) \bmod \Delta^3 = 0$
then $|B_2| = \Delta^3$, otherwise $|B_2| = (n-i_{start})
\bmod{\Delta^3}$. Once all of this information is computed the actual
operation can start. If $n = N+1$ and an \proc{ExtractMin} operation
is called, then the \proc{ExtractMin} procedure is executed and
afterwards the structure is rebuilt as described in the paragraph
below. Similarly if $n = 4N-1$ before an \proc{Insert} operation the
new element is appended and the data structure is rebuilt.
\vspace{-2mm}
\paragraph{\proc{Insert}}

If $|B_1| < \Delta^3$ the new element is inserted in $B_1$ by the
standard insertion algorithm for $\Delta$-ary heaps. If $|B_1| =
\Delta^3$ and $|B_2| = 0$ and a new element is inserted the two
elements in $b$ are swapped to indicate that $B_2$ now exists. When
$|B_1| = |B_2| = \Delta^3$ and a new element is inserted, $B_1$
becomes $I_{m+1}$, $B_2$ becomes $B_1$, ${m+1}$ is inserted in $Q$
(possibly requiring $O(\log n)$ values in $Q_h$ and $Q_{rev}$ to be
updated in $O(\log^2 n)$ time).  Finally the new element becomes~$B_2$.

\vspace{-2mm}
\paragraph{\proc{ExtractMin}}

Searches for the minimum element $e$ are performed in $B_1$, $B_2$,
$S$, and $Q$. If $e$ is in $B_1$ or $B_2$ it is deleted, the last
element in the array is swapped with the now empty slot and the usual
bubbling for heaps is performed. If $B_2$ disappears as a result, the
bit $b$ is updated accordingly. If $B_1$ disappears as a result, $I_m$
becomes $B_1$, and $m$ is removed from $Q$.

If $e$ is in $I_i$ then $i$ is deleted from $Q$, $e$ is extracted from
$I_i$, and the last element in the array is inserted in $I_i$. The
$\Delta^2$ smallest elements in $I_i$ are extracted and inserted into
the \emph{singles structure}: for each element a search in $S$ is
performed to find the range it belongs to, i.e.\ $D_j$, the structure
it is to be inserted in. Then it is inserted in $D_j$ (replacing a
dummy element that is put in $I_i$, found by binary search). If $|D_j|
= \Delta^3$ and $q = K$ the priority queue is rebuilt. Otherwise if
$|D_j| = \Delta^3$, $D_j$ is split in two by finding the median $y$ of
$D_j$ using a linear time selection
algorithm~\cite{CarlssonS95}. Elements $\ge y$ in $D_j$ are swapped
with the first $\Delta^3/2$ elements in $D_q$ then $D_j$ and $D_q$
are made into $\Delta$-ary heaps by repeated insertion.  Then $y$ is
extracted from $D_q$ and $(y,q)$ is inserted in $S$.  The dummy
element pushed out of $S$ by $y$ is inserted in~$D_q$.  Finally $q$ is
incremented and we reinsert $i$ into $Q$.  Note that it does not
matter if any of the elements in $I_i$ are dummy elements, the
invariants are still maintained.

If $(e, i) \in S$, the last element of the array is inserted into the
singles structure, which pushes out a dummy element~$z$.  The minimum
element $y$ of $D_i$ is extracted and $z$ inserted instead.  We
replace $e$ by $y$ in $S$.  If $y$ is a dummy element, we update $S$
as if $(y,i)$ was removed.  Finally $e$ is returned. Note this might
make $B_1$ or $B_2$ disappear as a result and the steps above are
executed if needed.

\vspace{-2mm}
\paragraph{Rebuilding}

We let the new $N = n'/2$, where $n'$ is $n$ rounded to the nearest
power of two.  Using a linear time selection
algorithm~\cite{CarlssonS95}, find the element with rank
$n-i_{start}$, this element is the new threshold element $e_t$, and it
is put in the first position of the array. Following $e_t$ are all the
elements greater than $e_t$ and they are followed by all the elements
comparing less than $e_t$. We make sure to have at least $\Delta^3 /
2$ elements in $B_1$ and at most $\Delta^3 / 2$ elements in $B_2$
which dictates whether $b$ encodes $0$ or $1$. The value $q$ is
initialized to $1$. All the $D_i$ structures are considered empty
since they only contain dummy elements. The pointers in $Q_h$ and
$Q_{rev}$ are all reset to the value $0$. All the $I_i$ structures as
well as $B_1$ (and possibly $B_2$) are made into $\Delta$-ary heaps
with the usual heap construction algorithm. For each $I_j$ structure
the $\Delta^2$ smallest elements are inserted in the singles structure
as described in the \proc{ExtractMin} procedure, and $j$ is inserted
into $Q$. The structure now satisfies all the invariants.

\vspace{-4mm}
\subsection{Analysis}
\vspace{-2mm}
\label{sec:amortized_analysis}
In this subsection we give the analysis that leads to the following
theorem.

\begin{theorem}
\label{thm:main}
There is a strictly implicit priority queue supporting \proc{Insert} in amortized $O(1)$ time,
\proc{ExtractMin}  in amortized $O(\log n)$ time. Both operations perform amortized $O(1)$
moves.
\end{theorem}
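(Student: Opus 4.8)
The plan is to prove the theorem by a component-wise accounting of time and moves for each operation, combined with an amortization argument that charges expensive but rare events to many cheap frequent ones, together with a separate argument bounding the cost of the periodic global rebuilds. Throughout, $\Delta = \Theta(\log n)$, $\Delta^3 = \Theta(\log^3 n)$, and the key size invariants are $|B|\le 2\Delta^3$, each $I_i$ and $D_i$ has $\Theta(\Delta^3)$ elements, $m = O(n/\Delta^3)$ insertion heaps, and $q \le K = O(n/\Delta^3)$ buckets. I will repeatedly use the $d$-ary heap bounds (\proc{Insert} in $O(\log_\Delta \Delta^3)=O(1)$ time and moves, \proc{ExtractMin} in $O(\Delta)$ time and $O(1)$ moves when restricted to a structure of size $\Theta(\Delta^3)$), the $O(\log^2 n)$-time / $O(\log^2 n)$-move bound for operations on $Q$ (a binary heap over $M=O(n/\Delta^3)$ entries, each access costing a factor $2\Delta$ for decoding), and Corollary~\ref{cor:bb2} for $S$ (\proc{Insert}/\proc{ExtractMin} in amortized $O(\log^3 n)$ time and moves, \proc{FindPredecessor} in $O(\log n)$ time, no moves).

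First I would handle \proc{Insert}. A plain insertion touches only $B_1$ (a $\Delta$-ary heap of size $\le\Delta^3$), costing worst-case $O(1)$ time and moves; flipping the bit $b$ is $O(1)$. The only expensive case is when $|B_1|=|B_2|=\Delta^3$: then $B_1$ is frozen into $I_{m+1}$, $B_2$ is relabeled $B_1$, and $m{+}1$ is inserted into $Q$ at cost $O(\log^2 n)$ time and moves. Since this happens only once per $\Delta^3=\Theta(\log^3 n)$ insertions, I assign each insertion a potential of $\Theta(\log^2 n / \log^3 n)=\Theta(1/\log n)$ units, actually simpler: give each of the $\Delta^3$ insertions since the last freeze $O(\log^2 n / \Delta^3) = O(1)$ amortized extra time and moves, so \proc{Insert} is amortized $O(1)$ time and $O(1)$ moves. (The relabeling of buffers is pointer arithmetic on indices, not element moves.)

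Next, \proc{ExtractMin}. Finding the minimum scans the $O(1)$ candidate locations — the roots of $B_1,B_2$, the root of $Q$ (which points to the live $I_i$ with smallest minimum), and \proc{FindPredecessor}/min in $S$ — all in $O(\log n)$ time, no moves. If the minimum is in $B_1$ or $B_2$: delete and re-heapify a $\Theta(\Delta^3)$-size $\Delta$-ary heap in $O(\Delta)=O(\log n)$ time and $O(1)$ moves; possible cascading (a buffer vanishes, so $I_m$ becomes $B_1$ and $m$ is removed from $Q$) adds $O(\log^2 n)$ once, but this too is rare (once per $\Theta(\Delta^3)$ operations) and amortizes to $O(1)$. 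If the minimum is in $S$ (case $(e,i)\in S$): one \proc{ExtractMin} on $D_i$ ($O(\Delta)$ time, $O(1)$ moves), one insertion of the array's last element into $S$ and replacing $e$ by $y$ — amortized $O(\log^3 n)$ time and moves by Corollary~\ref{cor:bb2}; I must charge this $O(\log^3 n)$ to something, which is the crux discussed below. If the minimum is in $I_i$ (root of $Q$): delete $i$ from $Q$, extract $e$, refill $I_i$ with the array's last element, then extract the $\Delta^2$ smallest remaining elements of $I_i$ and insert each into the appropriate bucket $D_j$ (located by \proc{FindPredecessor} in $S$ in $O(\log n)$, inserted into the $\Delta$-ary heap $D_j$ in $O(1)$ time and moves, a dummy swapped back into $I_i$), possibly splitting one $D_j$ (median selection in $O(\Delta^3)$ time and moves, one \proc{Insert} into $S$), and reinsert $i$ into $Q$. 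The cost here is $O(\log^2 n)$ for the two $Q$ operations, $O(\Delta^2\cdot\Delta) = O(\Delta^3)=O(\log^3 n)$ for extracting $\Delta^2$ elements from $I_i$, $O(\log^3 n)$ for an $S$-insert on a split — but this whole block runs only once per $\Delta^2=\Theta(\log^2 n)$ consecutive \proc{ExtractMin}s that hit the same $I_i$ (the other $\Delta^2-1$ of them are cheap $S$/$D_i$-deletions), so the $O(\log^3 n)$ block amortizes to $O(\log n)$ per \proc{ExtractMin}. The $O(\log^3 n)$-cost $S$-insert of the last array element in the $(e,i)\in S$ case is the one that needs a global argument.

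The main obstacle, and the place to spend care, is making the amortization over the singles structure $T$ and the global rebuild watertight — essentially reproducing the overview's counting argument rigorously with a potential function. The plan is: over any run of $d$ \proc{ExtractMin}s starting from a rebuilt state, the number of elements ever inserted into $T$ is at most $d + m\Delta^2 = d + O(n/\Delta^3)\cdot\Delta^2 = d + O(n/\Delta)$, since each \proc{ExtractMin} pushes at most one element (the array's tail) into $S$/$T$ directly, plus one full drain of $\Delta^2$ elements per $I_i$ per $\Delta^2$ ExtractMins. Each such insertion (into $D_j$, plus the amortized-$O(\log^3 n)$ share of an $S$-insert from a split, one split per $\Delta^3/2$ insertions into $T$) costs amortized $O(\Delta) = O(\log n)$ per element moved into a $D_j$ and $O(\log^3 n / \Delta^3) = O(1)$ amortized for the $S$ side — so the total $T$-cost is $O((d + n/\Delta)\log n) = O(n + d\log n)$ time, and analogously $O(n + d\log n)$ moves. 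Finally, a global rebuild costs $O(n)$ time and moves (selection, partition, rebuilding all $\Delta$-ary heaps, refilling $S$ and $Q$ via $m = O(n/\Delta^3)$ heaps each contributing $\Delta^2$ elements to $T$ at $O(\Delta)$ each, i.e. $O(n/\Delta^3\cdot\Delta^2\cdot\Delta)=O(n)$, and $O(m)=O(n/\Delta^3)$ insertions into $Q$ at $O(\log^2 n)$ each, i.e. $O(n/\log n)=O(n)$), but a rebuild is triggered only when $n$ has halved or quadrupled relative to the previous $N$, hence after $\Omega(n)$ operations, so it contributes amortized $O(1)$. Assembling: \proc{Insert} is amortized $O(1)$ time and $O(1)$ moves; over $d$ \proc{ExtractMin}s the total is $O(n + d\log n)$ time and moves, and absorbing the $O(n)$ from rebuilds and the initial-build into the first $\Omega(n)$ operations gives amortized $O(\log n)$ time and $O(1)$ moves per \proc{ExtractMin}. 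A standard potential-function wrapper (assigning $\Theta(\log n)$ potential to each element of $B$, each element of an $I_i$ not yet drained, and each element of a $D_i$, and $\Theta(\log^2 n)$ to each entry of $Q$) converts this aggregate bound into the stated per-operation amortized bounds, completing the proof of Theorem~\ref{thm:main}. \qed
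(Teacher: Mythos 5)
Your overall strategy is the same as the paper's: per-component accounting, charging the $O(\log^2 n)$ operations on $Q$ and the $\Delta^2$-element drains of an $I_i$ to either $\Omega(\Delta^3)$ insertions or $\Delta^2$ extractions, and the aggregate bound of at most $d+m\Delta^2$ insertions into $T$ over $d$ \proc{ExtractMin}s is exactly the paper's counting argument. There is, however, one genuine gap: you assert that ``a rebuild is triggered only when $n$ has halved or quadrupled relative to the previous $N$.'' That is not true of the algorithm being analyzed: a rebuild is also triggered during \proc{ExtractMin} when a bucket split would force $q>K$, i.e.\ when the singles structure runs out of buckets, and this can happen without $n$ drifting by a constant factor. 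The paper treats this case separately: Lemma~\ref{lem:limit_regular} shows that immediately after a rebuild only $o(n)$ (at most $n/\Delta$) non-dummy elements sit in $T$, while $T$ accommodates $\Omega(N)$ elements; since each extraction causes only $O(1)$ amortized insertions into $T$, it takes $\Omega(N)$ extractions to exhaust the $K$ buckets, so this kind of rebuild is also paid for. Your own aggregate count ($d+m\Delta^2 = d+O(n/\Delta)$ insertions into $T$) contains the ingredients for exactly this argument, but as written you never connect it to the $q>K$ trigger, so that rebuild case is unaccounted for.

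A smaller slip: you state the total $T$-cost over $d$ extractions is ``$O(n+d\log n)$ time, and analogously $O(n+d\log n)$ moves.'' Taken literally, the move bound yields only amortized $O(\log n)$ moves per \proc{ExtractMin}, contradicting the theorem. The correct count --- which you in fact establish one sentence earlier --- is $O(1)$ moves per insertion into $T$ (\proc{FindPredecessor} performs no moves, a $D_j$-insertion is $O(1)$ moves, and the $O(\log^3 n)$ moves for an $S$-update amortize over the $\Omega(\Delta^3)$ operations needed to trigger a split or elimination), giving $O(d+n/\Delta)$ moves in total; only the time bound carries the $\log n$ factor. With these two repairs your argument coincides with the paper's proof.
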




\paragraph{\proc{Insert}}
While $|B| < 2\Delta^3$, each insertion takes $O(1)$ time.  When an
insertion happens and $|B|=2\Delta^3$, the insertion into $Q$ requires
$O(\log^2 n)$ time and moves. During a sequence of $s$ insertions,
this can at most happen $\lceil s/\Delta^3\rceil$ times, since $|B|$
can only increase for values above $\Delta^3$ by insertions, and each
insertion at most causes $|B|$ to increase by one. The total cost for
$s$ insertions is $O(s+s/\Delta^3\cdot\log^2 n)=O(s)$, i.e., amortized
constant per insertion.


\paragraph{\proc{ExtractMin}}

We first analyze the cost of updating the singles structure. Each
operation on a $D_i$ takes time $O(\Delta)$ and performs $O(1)$
moves. Locating an appropriate bucket using $S$ takes $O(\log n)$ time
and no moves. At least $\Omega(\Delta^3)$ operations must be
performed on a bucket to trigger an expensive bucket split or bucket
elimination in $S$. Since updating $S$ takes $O(\log^3 n)$ time, the
amortized cost for updating $S$ is $O(1)$ moves per insertion and
extraction from the singles structure. In total the operations on the
singles structure require amortized $O(\log n)$ times and amortized
$O(1)$ moves.
For \proc{ExtractMin}
the searches performed all take $O(\log n)$ comparisons and no
moves. If $B_1$ disappears as a result of an extraction we know at
least $\Omega(\Delta^3)$ extractions have occurred because a rebuild
ensures $|B_1| \ge \Delta^3 / 2$. These extractions pay for extracting
$I_m$ from $Q_h$ which takes $O(\log^2 n)$ time and moves, amortized
this gives $O(1/\log n)$ additional time and moves. If the extracted
element was in $I_i$ for some $i$, then $\Delta^2$ insertions occur in
the singles structure each taking $O(\log n)$ time and $O(1)$ moves
amortized. If that happens either $\Omega(\Delta^3)$ insertions or
$\Delta^2$ extractions have occurred: Suppose no elements from $I_i$
have been inserted in the singles structure, then the reason there is
a pointer to $I_i$ in $Q_h$ is due to $\Omega(\Delta^3)$
insertions. When inserting elements in the singles structure from
$I_i$ the number of elements inserted is $\Delta^2$ and these must
first be deleted. From this discussion it is evident that we have
saved up $\Omega(\Delta^2)$ moves and $\Omega(\Delta^3)$ time, which
pay for the expensive extraction. Finally if the minimum element was
in $S$, then an extraction on a $\Delta$-ary heap is performed which
takes $O(\Delta)$ time and $O(1)$ moves, since its height is $O(1)$.

\paragraph{Rebuilding}
The cost of rebuilding is $O(n)$, due to a selection and building
heaps with $O(1)$ height. There are three reasons a rebuild might
occur:
\begin{inparaenum}[(i)]
  \item $n$ became $4N$,
  \item $n$ became $N-1$, or
  \item An insertion into $T$ would cause $q > K$.
\end{inparaenum}
By the choice of $N$ during a rebuild it is guaranteed that in the
first and second case at least $\Omega(N)$ insertions or extractions
occurred since the last rebuild, and we have thus saved up at least
$\Omega(N)$ time and moves. For the last case we know that each
extraction incur $O(1)$ insertions in the singles structure in an
amortized sense. Since the singles structure accommodates $\Omega(N)$
elements and a rebuild ensures the singles structure has $o(n)$ non
dummy elements (\rlem{limit_regular}), at least $\Omega(N)$
extractions have occurred which pay for the rebuild.
\vspace{-1mm}
\begin{lemma}
\label{lem:limit_regular}
Immediately after a rebuild $o(n)$ elements in the singles structure
are non-dummy elements
\end{lemma}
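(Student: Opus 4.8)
The goal is to bound the number of non-dummy elements in the singles structure $T$ immediately after a rebuild. During a rebuild, every $D_i$ is emptied (all its slots become dummy elements), so the only non-dummy elements placed into $T$ are those inserted as part of processing the $I_j$ structures: for each of the $m$ insertion heaps $I_j$ we extract its $\Delta^2$ smallest elements and insert them into the singles structure (exactly as in \proc{ExtractMin}). Hence the count of non-dummy elements in $T$ right after a rebuild is at most $m\Delta^2$.

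First I would pin down $m$ in terms of $N$ and $\Delta$. After a rebuild we have $N = n'/2$ with $n' $ the nearest power of two to $n$, so $n = \Theta(N)$ and $\Delta = \log(4N) = \Theta(\log N)$. The insertion heaps each hold $\Delta^3$ elements and, together with $B$ and $T$, partition the $n$ elements; therefore $m \le n/\Delta^3 = O(N/\Delta^3)$. Plugging in, the number of non-dummy elements introduced into $T$ is at most $m\Delta^2 = O(N/\Delta^3)\cdot\Delta^2 = O(N/\Delta) = O\!\left(\frac{n}{\log n}\right)$.

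Finally I would observe that $O(n/\log n) = o(n)$, which is exactly the claim. This is essentially a counting argument; the only slightly delicate point is confirming that a rebuild truly resets all $D_i$ to contain only dummy elements and that the threshold element $e_t$ is chosen so that precisely the $i_{start}$ "small" elements (those going to $B$ and the $I_j$'s) are separated from the dummy elements — this is guaranteed by the rebuilding procedure, which selects $e_t$ as the element of rank $n - i_{start}$ and sets $q=1$. Given that, no further subtlety arises: the bound $m\Delta^2$ on non-dummy elements in $T$ is immediate, and the arithmetic $m\Delta^2 = O(n/\log n)$ finishes the proof. The main (mild) obstacle is just being careful that the rebuild does not leave any previously-stored non-dummy element inside a $D_i$; everything else is a direct substitution of the parameter values fixed in Section~\ref{sec:moverepresentation}.
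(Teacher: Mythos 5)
Your argument is correct and is essentially the paper's own proof: count at most $n/\Delta^3$ insertion heaps, each contributing $\Delta^2$ elements to the singles structure during the rebuild, giving $O(n/\Delta)=o(n)$ non-dummy elements. The extra care you take about the rebuild resetting all $D_i$ to dummy elements is consistent with the rebuilding procedure and does not change the argument.
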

\vspace{-3mm}
\begin{proof}
There are at most $n/\Delta^3$ of the $I_i$ structures and $\Delta^2$
elements are inserted in the singles structure from each $I_i$, thus
at most $n/\Delta = o(n)$ non-dummy elements reside in the singles
structure after a rebuild. \qed
\end{proof}
The paragraphs above establish \rthm{main}.
\vspace{-3mm}

\section{Worst case solution}
\label{sec:worst_case_solution}

In this section we present a strictly implicit priority queue
supporting \proc{Insert} in worst-case $O(1)$ time and
\proc{ExtractMin} in worst-case $O(\log n)$ time (and moves). The data
structure requires all elements to be distinct.
The main concept used is a variation on binomial trees. The priority queue
is a forest of $O(\log n)$ such trees. We start with a discussion of
the variant we call \emph{relaxed binomial trees}, then we describe
how to maintain a forest of these trees in an amortized sense, and
finally we give the deamortization.
\vspace{-4mm}
\subsection{Relaxed binomial tree}

\emph{Binomial trees} are defined inductively: A single node is a
binomial tree of size one and the node is also the root. A binomial
tree of size $2^{i+1}$ is made by \emph{linking} two binomial trees
$T_1$ and $T_2$ both of size $2^i$, such that one root becomes the
rightmost child of the other root.
We lay out in memory 
a binomial
tree of size $2^{i}$ 
by a preorder traversal of
the tree where children are visited in order of increasing size,
i.e.\ $c_0, c_1, \ldots, c_{i-1}$. This layout is also described in
\cite{CarlssonMP88}. See \rfig{rbq} for an illustration of the
layout. 
In a \emph{relaxed binomial tree} (RBT) each nodes stores an element, 
satisfying the following order:
Let $p$ be a node with $i$ children, and let $c_j$ be a child of $p$.
Let $T_{c_j}$ denote the set of elements in the subtree rooted at
$c_j$. We have the invariant that the element $c_\ell$ is less than
either all elements in $T_{c_\ell}$ or less than all elements in
$\bigcup_{j < \ell} T_{c_j}$ (see \rfig{rbq}). In particular we have
the requirement that the root must store the smallest element in the
tree. In each node we store a flag indicating in which direction the
ordering is satisfied.  Note that linking two adjacent RBTs of equal
size can be done in $O(1)$ time: compare the keys of the two roots, if
the lesser is to the right, swap the two nodes and finally update the
flags to reflect the changes as just described. \todo{details?}

For an unrelated technical purpose we also
need to store whether a node is the root of a RBT. This
information is encoded using three elements per node (allowing $3! = 6$
permutations, and we only need to differentiate between three states per node: ``root'', ``minimum
of its own subtree'', or ``minimum among strictly smaller subtrees'').

\fig{h}{1.0}{}{rbq.pdf}{rbq}{An example of an RBT on 16 elements
  (a,b,...,o). The layout in memory of an RBT and a regular
  binomial tree is the same. Note here that node 9 has element $c$ and is
  not the minimum of its subtree because node 11 has element $b$, but $c$
  is the minimum among the subtrees rooted at nodes $2$, $3$, and $5$
  ($c_0$, $c_1$, and $c_2$). Note also that node $5$ is the minimum of
  its subtree but not the minimum among the trees rooted at nodes 2
  and 3, which means only one state is valid. Finally node $3$ is the
  minimum of both its own subtree and the subtree rooted at node $2$,
  which means both states are valid for that node.}

To extract the minimum element of an RBT it is replaced by another
element. The reason for replacing is that the forest of RBTs is
implicitly maintained in an array and elements are removed from the
right end, meaning only an element from the last RBT is removed. If
the last RBT is of size $1$, it is trivial to remove the
element. If it is larger, then we \emph{decompose} it. We
first describe how to perform a \proc{Decompose} operation which
changes an RBT of size $2^i$ into $i$ structures
$T_{i-1},\ldots,T_1,T_0$, where $|T_j| = 2^j$.
Then we describe how to perform
\proc{ReplaceMin} which takes one argument, a new element, and
extracts the minimum element from an RBT and inserts the argument in
the same structure.

A \proc{Decompose} procedure is essentially reversing insertions. We
describe a tail recursive procedure taking as argument a node~$r$.  If
the structure is of size one, we are done. If the structure is of size
$2^i$ the $(i-1)$th child, $c_{i-1}$, of $r$ is inspected, if it is
not the minimum of its own subtree, the element of $c_{i-1}$ and $r$
are swapped. The $(i-1)$th child should now encode ``root'', that way
we have two trees of size $2^{i-1}$ and we recurse on the subtree to
the right in the memory layout. This procedure terminates in $O(i)$
steps and gives $i+1$ structures of sizes $2^{i-1},
2^{i-2},\ldots,2,1$, and $1$ laid out in decreasing order of size
(note there are two structures of size $1$). This enables easy removal
of a single element.

The \proc{ReplaceMin} operation works similarly to the
\proc{Decompose}, where instead of always recursing on the right, we
recurse where the minimum element is the root. When the recursion
ends, the minimum element is now in a structure of size $1$, which is
deleted and replaced by the new element. The decomposition is then
reversed by linking the RBTs using the \proc{Link} procedure. Note it
is possible to keep track of which side was recursed on at every level
with $O(\log n)$ extra bits, i.e.\ $O(1)$ words. The operation takes
$O(\log n)$ steps and correctness follows by the \proc{Decompose} and
\proc{Link} procedures. This concludes the description of RBTs and
yields the following theorem.

\begin{theorem}
  \label{thm:rbq}
  On an RBT with $3 \cdot 2^i$ elements, \proc{Link} and \proc{FindMin}
  can be supported in $O(1)$ time and \proc{Decompose} and
  \proc{ReplaceMin} in $O(i)$ time.
\end{theorem}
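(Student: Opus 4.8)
The plan is to establish each of the four claimed bounds in \rthm{rbq} by referring back to the constructions already described, treating them as the building blocks whose costs we simply tally up. Since the theorem speaks of an RBT with $3\cdot 2^i$ elements, I first note that our structure is really a small forest of a constant number of RBTs whose sizes are powers of two summing to $3\cdot 2^i$ (concretely two trees of size $2^i$ and one of size $2^{i}$, or whatever decomposition of $3\cdot 2^i$ into powers of two one prefers); the $O(1)$ and $O(i)$ bounds are unaffected by this constant factor, so it suffices to argue the bounds for a single RBT of size $2^i$ and then observe that operating on a constant number of them costs only a constant factor more. I would state this reduction explicitly in the first sentence of the proof so the ``$3\cdot 2^i$'' is not a distraction.

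Next I would dispatch \proc{FindMin} and \proc{Link} immediately. For \proc{FindMin}: the RBT order invariant forces the root to hold the globally smallest element of its tree, so returning the root costs $O(1)$ time and $0$ moves; with a constant number of trees in the forest we compare their roots, still $O(1)$. For \proc{Link}: as already described in the text, linking two adjacent equal-size RBTs requires one comparison of the two roots, possibly swapping the two root nodes (a constant number of element moves, since each node is stored using a constant number, three, of array cells for the state flag), and updating the flags of the two roots to reflect the new order direction; all of this is clearly $O(1)$ time. I would make sure the proof observes that the new tree automatically satisfies the RBT invariant: the former loser-root becomes the new rightmost child $c_i$, and it is the minimum of its own subtree, so its flag can be set to the ``minimum of its own subtree'' state, while the invariant for the previously existing children $c_0,\dots,c_{i-1}$ is untouched since their subtrees and their union are unchanged.

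For \proc{Decompose}, the argument is an induction on the size matching the tail-recursive procedure already given: at the structure of size $2^j$ we inspect child $c_{j-1}$, perform at most one swap (bounded moves) to ensure $c_{j-1}$ becomes a valid root of a size-$2^{j-1}$ tree, reset its state flag, and recurse on the right half, which is also a size-$2^{j-1}$ RBT. Each level does $O(1)$ work, there are $i$ levels, hence $O(i)$ time total, producing the claimed sequence $T_{i-1},\dots,T_1,T_0$ (plus the extra singleton). The correctness obligation is that after the swap the left part is a genuine RBT of size $2^{j-1}$: this is exactly the statement that a correctly-flagged RBT of size $2^j$ decomposes into two correctly-flagged RBTs of size $2^{j-1}$, which is the inverse of \proc{Link} and follows from the invariant definition. \proc{ReplaceMin} then costs $O(i)$ by the same accounting: one downward pass of $i$ levels, at each level recursing toward the side containing the minimum (tracked with $O(\log n)$ extra bits during the operation, i.e.\ $O(1)$ words of temporary state, which is allowed since it is not stored between operations), deleting the singleton holding the minimum and overwriting it with the new argument element, then $i$ \proc{Link} calls back up, each $O(1)$, for $O(i)$ total.

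The main obstacle, and the part deserving the most care in the write-up, is the correctness of \proc{ReplaceMin} rather than its running time: one must verify that after replacing the minimum by an arbitrary new element and re-linking, the RBT invariant is globally restored. The subtlety is that the new element may be larger than many elements in the tree, so the flags chosen along the recursion path must be re-derived bottom-up by the \proc{Link} calls (each \proc{Link} recomputes the comparison and the flag for the node it creates), and I would argue that since every link along the path is performed afresh, the resulting flags are consistent with whatever element now sits at each position; the subtrees hanging off the path were never modified, so their internal invariants persist, and the only invariants that could have been disturbed are precisely those on the path, which the re-linking re-establishes. I would present this as a short induction up the recursion path. The move count for \proc{Decompose} and \proc{ReplaceMin} is $O(i)$ as well (a constant number of element writes per level), which the theorem statement for the worst-case section cares about, though \rthm{rbq} as stated only mentions time; I would nonetheless remark on it since \rsec{sec:worst_case_solution} advertises $O(\log n)$ moves for \proc{ExtractMin}.
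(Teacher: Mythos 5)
Your overall accounting is the same as the paper's: the theorem is justified simply by the procedure descriptions (compare roots for \proc{Link}/\proc{FindMin}, $O(1)$ work per level over $i$ levels for \proc{Decompose} and \proc{ReplaceMin}), and your tallies for those costs are the intended ones. However, your opening reduction misreads the statement. The RBT of the theorem is a \emph{single} tree with $2^i$ nodes, not a constant-size forest whose sizes sum to $3\cdot 2^i$: the factor $3$ is there because each node occupies three array cells, and the permutation of those three elements ($3!=6$ orderings) encodes one of the three per-node states ``root'', ``minimum of its own subtree'', ``minimum among strictly smaller subtrees''. You in fact invoke this three-cells-per-node encoding later in your \proc{Link} argument, so the forest decomposition in your first sentence is both unnecessary and inconsistent with the rest of your write-up; taken literally it would also make \proc{Decompose}, which turns one tree of $2^i$ nodes into $i+1$ trees of sizes $2^{i-1},\ldots,2,1,1$, ill-defined.

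The more substantive flaw is in your correctness argument for \proc{Link}. You claim the former loser-root ends up as the new rightmost child $c_i$ and ``is the minimum of its own subtree''. That holds only when no swap occurs (the lesser root was already on the left). When the lesser root is on the right and the two root nodes are swapped, the old left root lands at position $c_i$ and need not be the minimum of its new subtree: take left tree $\{5,7\}$ with root $5$ and right tree $\{1,3\}$ with root $1$; after linking, $c_1$ holds $5$ while its subtree contains $3$. What is true in that case is exactly the relaxed alternative: the old left root is smaller than everything in $T_{c_0}\cup\dots\cup T_{c_{i-1}}$ (its former tree), so its flag must be set to ``minimum among strictly smaller subtrees''. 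This relaxation is the entire reason \proc{Link} can run in $O(1)$ time without restoring heap order inside $T_{c_i}$; with the flag you assign, the invariant is violated, and a later \proc{Decompose} or \proc{ReplaceMin}, which consults that flag to decide whether to swap, could produce an invalid structure. The fix is local (choose the flag according to which of the two cases occurred), and with it the remainder of your argument --- decomposition as the inverse of linking, and re-linking along the recursion path in \proc{ReplaceMin} --- goes through as in the paper.
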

\vspace{-6mm}
\subsection{How to maintain a forest}

As mentioned our priority queue is a forest of the relaxed binomial
trees from \rthm{rbq}. An easy amortized solution is to store one
structure of size $3 \cdot 2^j$ for every set bit $j$ in the binary
representation of $\lfloor n/3 \rfloor$. During an insertion this
could cause $O(\log n)$ \proc{Link} operations, but by a similar
argument to that of binary counting, this yields $O(1)$ amortized
insertion time. We are aiming for a worst case constant time solution
so we maintain the invariant that there are at most $5$ structures of
size $2^i$ for $i=0,1,\ldots,\lfloor \log n \rfloor$. This enables us
to postpone some of the \proc{Link} operations to appropriate
times. We are storing $O(\log n)$ RBTs, but we do not store which
sizes we have, this information must be decodable in constant time
since we do not allow storing additional words. Recall that we need
$3$ elements per node in an RBT, thus in the following we let $n$ be
the number of elements and $N = \lfloor n/3 \rfloor$ be the number of
nodes. We say a node is in node position $k$ if the three elements in
it are in positions $3k-2$, $3k-1$, and $3k$. This means there is a
buffer of $0,1$, or $2$ elements at the end of the array. When a third
element is inserted, the elements in the buffer become an RBT with a
single node and the buffer is now empty. If an \proc{Insert} operation
does not create a new node, the new element is simply appended to the
buffer. We are not storing the structure of the forest (i.e.\ how many
RBTs of size $2^j$ exists for each $j$), since that would require
additional space. To be able to navigate the forest we need the
following two lemmas.
\vspace{-1mm}
\begin{lemma}
  \label{lem:iffpositions}
  There is a structure of size $2^i$ at node positions
  $k,k+1,\ldots,k+2^i-1$ if and only if the node at position $k$ encodes
  ``root'', the node at position $k+2^i$ encodes ``root'' and the node
  at position $k+2^{i-1}$ encodes ``not root''.
\end{lemma}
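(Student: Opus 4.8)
The plan is to prove both directions of the biconditional, using the fact that an RBT of size $2^i$ occupies a contiguous block of $2^i$ node positions in memory, laid out as a preorder traversal with children visited in increasing order of size.

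First I would establish the ``only if'' direction. Suppose there is an RBT of size $2^i$ at node positions $k,\ldots,k+2^i-1$. Its root is at position $k$, so by the root-flag encoding the node at position $k$ encodes ``root''. For $i\ge 1$ the root has children $c_0,\ldots,c_{i-1}$ of sizes $2^0,\ldots,2^{i-1}$. By the preorder-by-increasing-size layout, the largest child $c_{i-1}$, which has size $2^{i-1}$, is the last subtree in memory and hence occupies node positions $k+2^{i-1},\ldots,k+2^i-1$; therefore its root sits at position $k+2^{i-1}$. Since $c_{i-1}$ is a (proper) child of $r$, not the root of the whole RBT, the node at position $k+2^{i-1}$ encodes ``not root'' — here I must use that the encoding distinguishes ``root'' from the two non-root states, so any non-root node reads as ``not root''. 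Finally, whatever structure begins at node position $k+2^i$ (a distinct RBT, or nothing), if a structure begins there its root is at position $k+2^i$ and encodes ``root''; I should be careful to state the lemma under the standing assumption that the forest is a valid sequence of RBTs so that position $k+2^i$, if it is a node position, is the start of some tree in the forest.

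Next I would prove the ``if'' direction, which is the more delicate one and the main obstacle. Here I would argue by induction on $i$. Given that position $k$ encodes ``root'', position $k+2^i$ encodes ``root'', and position $k+2^{i-1}$ encodes ``not root'', I want to conclude the tree rooted at $k$ has size exactly $2^i$. The point is that the forest decomposes memory into a sequence of consecutively-laid-out RBTs whose sizes are powers of two; the tree starting at position $k$ has some size $2^j$. If $j<i$ then position $k+2^j\le k+2^{i-1}$ would be the root of the next tree, hence encode ``root''; the condition at $k+2^{i-1}$ forbids $j=i-1$, and for smaller $j$ I would chase the recursive decomposition (each RBT of size $2^j$ has its last child of size $2^{j-1}$ starting at its own midpoint) to show every node position strictly between two tree boundaries that is itself the start of a maximal child subtree is forced, and in particular position $k+2^{i-1}$ cannot be ``not root'' unless $j\ge i$ — this needs the observation that within a single RBT of size $2^j$ the only ``root''-encoded node is its overall root, while positions $k, k+2^{j-1}$ are, respectively, root and the root of the size-$2^{j-1}$ last child (non-root). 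Conversely if $j>i$, then by the layout the size-$2^{j-1}$ last child of the tree at $k$ starts at position $k+2^{j-1}\ge k+2^i$, and one checks position $k+2^i$ lies strictly inside this tree at a spot that is not a subtree root (again by descending the child decomposition), contradicting that $k+2^i$ encodes ``root''. Hence $j=i$.

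To keep the argument clean I would first isolate a helper observation: within a single RBT laid out in memory, node position $\ell$ (relative to the tree's start) encodes ``root'' if and only if $\ell=0$, and more generally the set of relative positions that are roots of the subtrees $T_{c_0},\ldots,T_{c_{i-1}}$ is exactly $\{0, 2^{i-1}, 2^{i-1}+2^{i-2}, \ldots, 2^i-1\}$, i.e.\ the partial sums from the top; only the first of these is an RBT root. With that observation the two directions become a matter of comparing $2^j$ (the actual tree size at $k$) against $2^{i-1}$ and $2^i$ using the three given flag values, and the induction closes. I expect the bookkeeping of ``which relative positions are subtree roots'' to be the part that needs the most care, but it is a direct consequence of the inductive definition of binomial trees together with the stated preorder-by-increasing-size layout.
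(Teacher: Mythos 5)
Your ``only if'' direction is fine, and so is the case where the tree starting at position $k$ is larger than $2^i$ (any non-root node of an RBT reads ``not root'', so $k+2^i$ could not read ``root''; your extra detour about it not being a subtree root is unnecessary). The genuine gap is in the ``if'' direction when the tree at $k$ has size $2^j$ with $j\le i-2$. You try to close this case by ``chasing the recursive decomposition'' inside a single RBT, i.e.\ as a purely local statement about where subtree roots sit in one tree's layout. But in that case positions $k+2^{i-1}$ and $k+2^i$ belong to \emph{later trees of the forest}, and $k+2^{i-1}$ can read ``not root'' simply because it lies strictly inside one of them, not because it is an internal position of the tree at $k$. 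Indeed, without any forest-level assumption the lemma is false: trees of sizes $2^{i-2},\,2^{i-1},\,2^{i-2}$ laid out consecutively from $k$ make $k$ a root, put $k+2^{i-1}$ strictly inside the middle tree (hence ``not root''), and put a tree boundary exactly at $k+2^i$ (hence ``root''), yet there is no tree of size $2^i$ at $k$. No amount of bookkeeping of relative subtree-root positions inside one RBT can exclude such configurations; what excludes them are the global invariants maintained by \proc{Insert}/\proc{ExtractMin}, namely that the trees are stored in non-increasing order of size and that there are at most five trees of each size, and your proof never invokes either.

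This is precisely where the paper's argument differs: assuming the tree at $k$ has size $2^{i-1}$, $2^{i-2}$, $2^{i-3}$, or smaller, the ``not root'' condition at $k+2^{i-1}$ limits how many trees of sizes $2^{i-2}$ and $2^{i-3}$ can follow, the at-most-five invariant bounds all smaller sizes, and the ordering invariant rules out larger trees to the right; the total is then at most $2^{i-2}+2^{i-3}+5\sum_{j=0}^{i-4}2^j=2^i-5<2^i$ nodes, too few to tile $[k,k+2^i)$, contradicting that a root sits at $k+2^i$. (Equivalently one can argue from the ordering invariant that every prefix sum of tree sizes starting at $k$ is a multiple of the next tree's size, so the boundaries cannot skip over $k+2^{i-1}$.) To repair your proof you must import one of these forest invariants into the $j\le i-2$ case. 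Separately, a smaller slip: with children visited in \emph{increasing} order of size, the child-subtree roots of a size-$2^i$ RBT sit at relative positions $1,2,4,\ldots,2^{i-1}$, not at the partial sums $2^{i-1},2^{i-1}+2^{i-2},\ldots,2^i-1$; only the position $2^{i-1}$ of the largest child, which is the one the lemma actually needs, is as you state.
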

\vspace{-3mm}
\begin{proof}
  It is trivially true that the mentioned nodes encode ``root'',
  ``root'' and ``not root'' if an RBT with $2^i$ nodes is present in
  those locations.

  We first observe there cannot be a structure of size $2^{i-1}$
  starting at position~$k$, since that would force the node at
  position $k+2^{i-1}$ to encode ``root''. Also all structures between
  $k$ and $N$ must have less than $2^i$ elements, since both nodes at
  positions $k$ and $k+2^i$ encode ``root''. We now break the analysis
  in a few cases and the lemma follows from a proof by
  contradiction. Suppose there is a structure of size $2^{i-2}$
  starting at $k$, then for the same reason as before there cannot be
  another one of size $2^{i-2}$. Similarly, there can at most be one
  structure of size $2^{i-3}$ following that structure. Now we can
  bound the total number of nodes from position $k$ onwards in the
  structure as: $2^{i-2} + 2^{i-3} + 5\sum_{j=0}^{i-4} 2^j = 2^i - 5 <
  2^i$, which is a contradiction. So there cannot be a structure of
  size $2^{i-2}$ starting at position $k$. Note there can at most be
  three structures of size $2^{i-3}$ starting at position $k$, and we
  can again bound the total number of nodes as: $3 \cdot 2^{i-3} +
  5\sum_{j=0}^{i-4} 2^j = 2^i - 5 < 2^i$, again a contradiction. \qed
\end{proof}
\vspace{-3mm}
\begin{lemma}
  \label{lem:positions}
  If there is an RBT with $2^i$ nodes the root is in position $N -
  2^i k - x + 1$ for $k=1,2,3,4$ or $5$ and $x = N \bmod 2^i$.
\end{lemma}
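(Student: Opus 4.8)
Lemma~\ref{lem:positions} asks us to pin down the exact node position of the root of any RBT of size $2^i$ in the forest, given only $N$ and $i$. The plan is to combine a size-ordering fact about the forest with a counting argument, mirroring the style of the proof of Lemma~\ref{lem:iffpositions}.

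First I would establish that the RBTs are laid out in the array in \emph{decreasing} order of size: the largest tree occupies the leftmost positions, then the next largest, and so on, with the $0,1,2$-element buffer at the very end. This should follow from how \proc{Insert} and \proc{ExtractMin} maintain the forest — new single-node trees are appended at the right, links merge equal-size adjacent trees, and decompositions happen at the right end — so that at all times the sizes are weakly decreasing from left to right. (If this ordering is not literally an invariant stated earlier, I would add it as an explicit invariant of the construction; it is the natural companion to Lemma~\ref{lem:iffpositions}, which already implicitly assumes such structure.) Granting this, all trees strictly to the right of a given RBT of size $2^i$ have size at most $2^i$, and in fact all trees of size exactly $2^i$ form a contiguous block.

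Next I would count the nodes lying weakly to the right of (and including) the RBT in question. Let $x = N \bmod 2^i$. The trees of size strictly less than $2^i$ together contain a number of nodes that is a multiple of... no: more carefully, by the invariant there are at most $5$ trees of each size $2^0,\dots,2^{i-1}$, so the total number of nodes in trees of size $<2^i$ is at most $5\sum_{j=0}^{i-1}2^j = 5(2^i-1) < 5\cdot 2^i$, plus the buffer of at most $2$ nodes; combined with the fact that this count is congruent to... Here is the cleaner route: the nodes to the right of the block of size-$2^i$ trees, together with the buffer, number exactly $x$ modulo $2^i$ — because the total $N$ consists of some number of complete $2^i$-blocks (the big trees and the size-$2^i$ block) plus a remainder that is precisely what sits to the right, and that remainder is $x$ by definition of $x = N \bmod 2^i$, provided the smaller trees plus buffer total fewer than $2^i$ nodes. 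That bound, total $< 2^i$, is where I would lean on the "at most $5$ of each size" invariant exactly as in Lemma~\ref{lem:iffpositions}: $5(2^{i-1}-1) + 2 < 2^i$ when... that fails, so the right statement is that the smaller trees plus buffer total at most some amount that forces the remainder to be $x$; if there are $k$ trees of size $2^i$ (with $1\le k\le 5$), then $N = (\text{nodes in bigger trees}) + k\cdot 2^i + (\text{nodes in smaller trees and buffer})$, and since the smaller part is $<2^i$ it equals $N\bmod 2^i = x$ exactly — the bigger trees contribute a multiple of $2^i$ since each has size a power of two that is a multiple of $2^i$. Hence the root of the leftmost size-$2^i$ tree sits at position $N - k\cdot 2^i - x + 1$ for some $k\in\{1,\dots,5\}$, and more generally the root of \emph{any} size-$2^i$ tree is at $N - k'\cdot 2^i - x + 1$ with $k'$ ranging over $\{1,\dots,k\}\subseteq\{1,\dots,5\}$.

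The main obstacle I anticipate is getting the "smaller trees plus buffer total strictly less than $2^i$" bound to come out cleanly, since a naive count of $5$ trees of each smaller size gives roughly $5\cdot 2^i$, not $<2^i$. The resolution — already used in Lemma~\ref{lem:iffpositions} — is that one cannot simultaneously have $5$ trees of size $2^{i-1}$ \emph{and} $5$ of each smaller size while the geometric series still fits; but for this lemma the key realization is simpler: we don't need the smaller part to be small in absolute terms, only that it is exactly $N \bmod 2^i$, which holds because the bigger trees and the $k$ trees of size $2^i$ together contribute an exact multiple of $2^i$ to $N$, and the smaller trees plus buffer contribute fewer than $2^i$ nodes (here $2^{i-1}$ smaller trees of size up to $2^{i-1}$, at most $5$ each, still bounded by the same telescoping $5\sum_{j\le i-1}2^j + 2$ which we must then argue is actually $<2^i$ using that the $5$-of-each bound cannot be tight across all sizes — or alternatively, we invoke that the forest after any operation has at most a bounded deficiency, so the smaller part is genuinely $<2^i$). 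I would spell out that congruence computation carefully and cite Lemma~\ref{lem:iffpositions}'s counting technique to close the gap, yielding the stated range $k\in\{1,2,3,4,5\}$.
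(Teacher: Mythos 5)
Your overall framing (trees laid out in decreasing order of size, then a counting argument on the nodes lying to the right of the size-$2^i$ block) is the right setting, but the central counting step is wrong, and you half-recognize this without repairing it. You want the nodes in structures of size at most $2^{i-1}$ to total strictly fewer than $2^i$, so that this total is exactly $x = N \bmod 2^i$, and you then read $k$ as the number of size-$2^i$ trees. But the invariant deliberately allows up to five trees of \emph{each} size $2^0,\dots,2^{i-1}$ -- that slack is the whole point of deferring links -- so the small part can be as large as $5(2^i-1)$; five trees of size $2^{i-1}$ alone already contribute $2.5\cdot 2^i$ nodes. There is no ``bounded deficiency'' invariant forcing it below $2^i$, and the counting trick of Lemma~\ref{lem:iffpositions} does not supply one here either: there the contradiction came from structures squeezed between two positions already known to encode ``root'', a hypothesis you do not have in this lemma.

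The paper's argument is shorter and purely modular: every structure of size $\ge 2^i$ contributes $0$ to $N \bmod 2^i$, so the number $S$ of nodes in structures of size $\le 2^{i-1}$ satisfies $S \equiv x \pmod{2^i}$; since also $S \le 5(2^i-1) < 5\cdot 2^i$, we get $S = x + j\cdot 2^i$ for some $j\in\{0,\dots,4\}$. These small structures occupy the last $S$ node positions, so the first size-$2^i$ tree has its root at position $N - S - 2^i + 1 = N - 2^i(j+1) - x + 1$, i.e.\ $k=j+1\in\{1,\dots,5\}$. In other words, $k$ is not the number of size-$2^i$ trees; it absorbs the unknown surplus multiples of $2^i$ hiding among the smaller structures, which is exactly why the lemma can only pin the root down to five candidate positions (which \proc{ExtractMin} then tests via Lemma~\ref{lem:iffpositions}) rather than to the single position your version would imply.
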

\vspace{-3mm}
\begin{proof}
  There are at most $5 \cdot 2^i - 5$ nodes in structures of size $\le
  2^{i-1}$. All structures of size $\ge 2^i$ contribute $0$ to $x$,
  thus the number of nodes in structures with $\le 2^{i-1}$ nodes must
  be $x$ counting modulo $2^i$. This gives exactly the five
  possibilites for where the first tree of size $2^{i}$ can be. \qed
\end{proof}

We now describe how to perform an \proc{ExtractMin}. First, if there
is no buffer ($n \bmod 3 = 0$) then \proc{Decompose} is executed on
the smallest structure. We apply \rlem{positions} iteratively for
$i=0$ to $\lfloor \log N \rfloor$ and use \rlem{iffpositions} to find
structures of size $2^i$. If there is a structure we call the
\proc{FindMin} procedure (i.e.\ inspect the element of the root node)
and remember which structure the minimum element resides in. If the
minimum element is in the buffer, it is deleted and the rightmost
element is put in the empty position. If there is no buffer, we are
guaranteed due to the first step that there is a structure with $1$
node, which is now the buffer. On the structure with the minimum
element \proc{ReplaceMin} is called with the rightmost element of the
array. The running time is $O(\log n)$ for finding all the structures,
$O(\log n)$ for decomposing the smallest structure and $O(\log n)$ for
the \proc{ReplaceMin} procedure, in total we get $O(\log n)$ for
\proc{ExtractMin}.

The \proc{Insert} procedure is simpler but the correctness proof is
somewhat involved. A new element is inserted in the buffer, if the
buffer becomes a node, then the \emph{least significant bit} $i$ of
$N$ is computed. If at least two structures of size $2^i$ exist (found
using the two lemmas above), then they are linked and become one
structure of size $2^{i+1}$.
\vspace{-1mm}
\begin{lemma}
The \proc{Insert} and \proc{ExtractMin} procedures maintain that at
most five structures of size $2^i$ exist for all $i \le \lfloor \log n
\rfloor$.
\end{lemma}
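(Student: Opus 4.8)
The plan is to track, for each size $2^i$, the number $s_i$ of structures of that size, and to argue that neither operation can push any $s_i$ above $5$. I would set up the argument as a potential-style invariant: assume before an operation that $s_i \le 5$ for all $i$, and show it still holds afterwards. The key observation is that each operation touches the forest in a very controlled way — \proc{Insert} creates at most one new node (hence increments $s_0$ by one) and then possibly performs a single \proc{Link} at the least significant bit position of $N$; \proc{ExtractMin} performs at most one \proc{Decompose} of the smallest structure, a single \proc{ReplaceMin} (which decomposes and relinks along one root-to-leaf path, leaving all $s_i$ unchanged since it is internally balanced), and deletes one node from the buffer. So the real content is bounding the net effect of these few local changes.

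First I would handle \proc{Insert}. A new node bumps $s_0$ from at most $5$ to at most $6$; but the procedure then checks the least significant bit $i$ of $N$ (the node count \emph{after} the insertion). I would argue that whenever $s_0$ would reach $6$, the least significant bit of $N$ is $0$, i.e. $i=0$, so a \proc{Link} at level $0$ is triggered: this is the crucial arithmetic step — $N$ odd forces $i=0$, and $s_0 \ge 6 > 2$ means at least two size-$1$ structures are available to link. Linking decreases $s_0$ by $2$ (back to $\le 4$) and increases $s_1$ by $1$. Then I must check the ripple: $s_1$ went from $\le 5$ to $\le 6$, but a \proc{Link} only fires once per \proc{Insert}, so I need the \emph{invariant to be slightly stronger}, e.g. that $s_i = 6$ is momentarily allowed only at the single level just linked into, and is immediately repaired by the next insertion that reaches that level. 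The cleanest formulation is an invariant like ``$s_i \le 5$ for all $i$, and if $N \bmod 2^{i+1} \ge 2^i$ then $s_i \le 4$'' (tying the count at level $i$ to bit $i$ of $N$), which is exactly what makes a single \proc{Link} at the least significant set bit sufficient — analogous to the binary-counter argument but with slack $5$ instead of $1$.

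Next I would handle \proc{ExtractMin}. Here $N$ decreases by one (a node is consumed), one \proc{Decompose} splits the smallest structure of size $2^i$ into one structure each of sizes $2^{i-1},\dots,2^0$ plus an extra size-$2^0$, so it increments $s_{i-1},\dots,s_1$ by one and $s_0$ by two, while decrementing $s_i$ by one; \proc{ReplaceMin} is size-neutral; and removing the buffer node does not change any $s_i$. I would argue that a \proc{Decompose} only happens when there is currently \emph{no} buffer, which by the same bit-of-$N$ bookkeeping means the relevant small counts were low enough to absorb the increments, and that after the decrement $N \to N-1$ the stronger invariant is restored. The main obstacle I anticipate is precisely pinning down the right strengthened invariant and verifying it is preserved in all four cases (insert without link, insert with link, extract with decompose, extract without decompose), together with the boundary behaviour when $\lfloor \log N\rfloor$ changes; the counting at each level is routine binary arithmetic once the invariant is correctly stated, but getting that statement right so that a \emph{single} \proc{Link} or \proc{Decompose} per operation genuinely suffices is the delicate part, and is what justifies the choice of the constant $5$ (one for the new node, up to two consumed by a link, leaving room for the decompose's contribution).
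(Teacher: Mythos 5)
Your overall strategy is the same as the paper's: strengthen the claim to an invariant that couples the forest's composition to the binary representation of $N$, and check it is preserved in each of the four cases (insert with/without a successful link, extract with/without a decompose). But the proof as written stops exactly at the point where the work is: you explicitly defer ``pinning down the right strengthened invariant,'' and the candidate you do propose ($s_i\le 5$, and $s_i\le 4$ when bit $i$ of $N$ is set) is a \emph{per-level count} with a one-bit side condition. The paper's invariant is different in a way that matters: it bounds the \emph{cumulative} node count $N_{\le i}$ (total number of nodes in all structures of size at most $2^i$) plus the exact countdown to the next link attempt at level $i$, namely
\[
  N_{\le i} + \bigl(2^{i+1} - ((N + 2^i) \bmod 2^{i+1})\bigr) \le 6\cdot 2^i - 1 .
\]
The bound $s_i\le 5$ then falls out since $s_i 2^i \le N_{\le i}$.

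The cumulative form is not a cosmetic choice; it is what makes the two hard cases go through, and neither is addressed in your sketch. First, when a link attempt at level $i$ \emph{fails} because at most one structure of size $2^i$ exists, the countdown term resets to its maximum $2^{i+1}$, so you must show the count at level $i$ and below is small enough to absorb that; the paper does this by invoking the invariant at level $i-1$ to get $N_{\le i-1}\le 5\cdot 2^{i-1}-1$ and hence $N_{\le i}\le 3.5\cdot 2^i-1$. A per-level statement about $s_i$ alone gives you no such handle. Second, between two consecutive link attempts at level $i$, new structures of size $2^i$ arrive from successful links at level $i-1$ (and from decomposes), and how many can arrive is governed by the total mass sitting at levels $<i$ --- again cumulative information. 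Your invariant also needs to survive \proc{ExtractMin}, where $N$ decreases (so your ``bit $i$ of $N$'' condition can flip at every level simultaneously) and a decompose increments every level below the decomposed one; the paper handles this cleanly because after decomposing the (previously smallest) structure one gets exactly $N_{\le j}=2^{j+1}-1$ for the affected $j$, which fits under $6\cdot 2^j-1$ even with the countdown at its maximum. So: right framework, correct accounting of how each operation perturbs the forest, but the actual invariant --- the content of the proof --- is missing, and the per-level formulation you gesture at does not obviously close the failed-link and deletion cases.
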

\begin{proof}
Let $N_{\le i}$ be the total number of nodes in structures of size
$\le 2^i$. Then the following is an invariant for $i = 0,1,\ldots,
\lfloor \log N \rfloor$.

\begin{equation*}
  N_{\le i} + (2^{i+1} - ((N + 2^i) \bmod 2^{i+1}))) \le 6\cdot 2^i - 1
\end{equation*}

The invariant states that $N_{\le i}$ plus the number of inserts
until we try to link two trees of size $2^i$ is at most $6 \cdot 2^i
- 1$. Suppose that a new node is inserted and $i$ is not the least
significant bit of $N$ then $N_{\le i}$ increases by one and so does
$(N + 2^i) \bmod{2^{i+1}}$, which means the invariant is
maintained. Suppose that $i$ is the least significant bit in $N$
(i.e. we try to link structures of size $2^i$) and there are at least
two structures of size $2^i$, then the insertion makes $N_{\le i}$
decrease by $2 \cdot 2^i - 1 = 2^{i+1} - 1$ and $2^{i+1} - (N + 2^i
\Mod 2^{i+1}))$ increases by $2^{i+1} - 1$, since $(N + 2^i) \bmod
2^{i+1}$ becomes zero, which means the invariant is maintained. Now
suppose there is at most one structure of size $2^i$ and $i$ is the
least significant bit of $N$. We know by the invariant that $N_{\le
  i-1} + (2^{i} - (N + 2^{i-1} \bmod 2^{i})) \le 6 \cdot 2^{i-1}-1$
which implies $N_{\le i-1} \le 6 \cdot 2^{i-1} - 1 - 2^i + 2^{i-1} =
5\cdot2^{i-1} -1$. Since we assumed there is at most one structure of
size $2^{i}$ we get that $N_{\le i} \le 2^i + N_{\le i -1} \le 2^i +
5\cdot 2^{i-1} - 1 = 3.5 \cdot 2^i - 1$. Since $N \bmod{2^{i+1}} =
2^i$ ($i$ is the least significant bit of $N$) we have $N_{\le i} +
(2^{i+1} - (N +2^i \Mod 2^{i+1})) \le 3.5\cdot 2^{i} - 1 + 2^{i+1} =
5.5 \cdot 2^i - 1 < 6\cdot 2^i - 1$.

The invariant is also maintained when deleting: for each $i$ where
$N_i>0$ before the \proc{ExtractMin}, $N_i$ decreases by one. For all
$i$ the second term increases by at most one, and possibly decreases
by $2^{i+1}-1$. Thus the invariant is maintained for all $i$ where
$N_i>0$ before the procedure. If $N_i = 0$ before an
\proc{ExtractMin}, we get $N_j = 2^{j+1} - 1$ for $j \le i$. Since the
second term can at most contribute $2^{j+1}$, we get $N_j
+ (2^{j+1} - ((N + 2^j) \bmod 2^{j+1})) \le 2^{j+1} - 1 + 2^{j+1} \le
6 \cdot 2^{j} - 1$, thus the invariant is maintained. \qed
\end{proof}

Correctness and running times of the procedures have now been established.
\vspace{-5mm}

\bibliographystyle{abbrv}
\bibliography{bib}

\begin{thebibliography}{10}

\bibitem{ianfest13}
G.~S. Brodal.
\newblock A survey on priority queues.
\newblock In {\em Proc. Conference on Space Efficient Data Structures, Streams
  and Algorithms -- Papers in Honor of J. Ian Munro on the Occasion of His 66th
  Birthday}, volume 8066 of {\em Lecture Notes in Computer Science}, pages
  150--163. Springer {V}erlag, Berlin, 2013.

\bibitem{BrodalFJ02}
G.~S. Brodal, R.~Fagerberg, and R.~Jacob.
\newblock Cache oblivious search trees via binary trees of small height.
\newblock In {\em Proceedings of the Thirteenth Annual {ACM-SIAM} Symposium on
  Discrete Algorithms}, pages 39--48, 2002.

\bibitem{BrodalNT12}
G.~S. Brodal, J.~S. Nielsen, and J.~Truelsen.
\newblock Finger search in the implicit model.
\newblock In {\em Algorithms and Computation - 23rd International Symposium,
  Proceedings}, pages 527--536, 2012.

\bibitem{CarlssonMP88}
S.~Carlsson, J.~I. Munro, and P.~V. Poblete.
\newblock An implicit binomial queue with constant insertion time.
\newblock In {\em 1st Scandinavian Workshop on Algorithm Theory, Proceedings},
  pages 1--13, 1988.

\bibitem{CarlssonS95}
S.~Carlsson and M.~Sundstr{\"{o}}m.
\newblock Linear-time in-place selection in less than 3n comparisons.
\newblock In {\em Algorithms and Computation, 6th International Symposium,
  Proceedings}, pages 244--253, 1995.

\bibitem{EdelkampEK13}
S.~Edelkamp, A.~Elmasry, and J.~Katajainen.
\newblock Ultimate binary heaps, 2013.
\newblock Manuscript.

\bibitem{Franceschini07}
G.~Franceschini.
\newblock Sorting stably, in place, with {$O(n \log n)$} comparisons and
  {$O(n)$} moves.
\newblock {\em Theory of Computing Systems}, 40(4):327--353, 2007.

\bibitem{fm06}
G.~Franceschini and J.~I. Munro.
\newblock Implicit dictionaries with {$O(1)$} modifications per update and fast
  search.
\newblock In {\em Proceedings of the Seventeenth Annual {ACM-SIAM} Symposium on
  Discrete Algorithms}, pages 404--413, 2006.

\bibitem{gray}
F.~Gray.
\newblock Pulse code communications, 1953.

\bibitem{Johnson77}
D.~B. Johnson.
\newblock Efficient algorithms for shortest paths in sparse networks.
\newblock {\em Journal of the ACM}, 24(1):1--13, 1977.

\bibitem{postorder}
H.~N.J.A. and Z.~K.C.
\newblock The post-order heap.
\newblock In {\em 3rd International Conference on Fun with Algorithms}, 2004.

\bibitem{W64}
J.~W.~J. Williams.
\newblock Algorithm 232: Heapsort.
\newblock {\em Communications of the {ACM}}, 7(6):347--348, 1964.

\end{thebibliography}
\newpage
\appendix
\section{Handling identical elements in the amortized case}
\label{appendix:identical}
The primary difficulty in handling identical elements is that we lose
the ability to encode bits. The primary goal of this section is to do
so anyway. The idea is to let the items stored in the priority queue
be pairs of distinct elements where the key of an item is the lesser
element in the pair. In the case where it is not possible to make a
sufficient number of pairs of distinct elements, almost all elements are equal
and this is an easy case to handle. Note that many pairs (or all for
that matter) can contain the same elements, but each pair can now
encode a bit, which is sufficient for our purposes.

The structure is almost the same as before, however we put a few more
things in the picture. As mentioned we need to use \emph{pairs of
distinct} elements, so we create a mechanism to produce
these. Furthermore we need to do some book keeping such as storing a
pointer and being able to compute whether there are enough pairs of
distinct elements to actually have a meaningful structure. The changes
to the memory layout is illustrated in \rfig{memory_layout_identical}.

\fig{h!}{1.0}{width=\textwidth}{illustration_identical.pdf}{memory_layout_identical}{The
  different structures and their layout in memory.}

\paragraph{Modifications}
The areas $L$ and $B'$ in memory are used to produce pairs of distinct
elements. The area $p_L$ is a Gray coded pointer\footnote{Gray, F.:
Pulse code communications. U.S. Patent (2632058) (1953)}\cite{gray} with
$\Theta(\log n)$ pairs, pointing to the beginning of $L$. The rest of
the structure is essentially the same as before, except instead of
storing elements, we now store pairs $e = (e_1, e_2)$ and the key of
the pair is $e_k = \min\{e_1,e_2\}$. All comparisons between items are
thus made with the key of the pair. We will refer to the priority
queue from \rsec{sec:amortized_solution} as PQ.

There are a few minor modifications to PQ. Recall that we needed to
simulate \emph{empty spaces} inside $T$ (specifically in $S$,
see \rfig{memory_layout}). The way we simulated empty spaces was by
having elements that compared greater than $e_t$. Now $e_t$ is
actually a pair, where the minimum element is the threshold
element. It might be the case that there are many items comparing
equal to $e_t$, which means some would be used to simulate empty
spaces and others would be actual elements in PQ and some would be
used to encode pointers. This means we need to be able to
differentiate these types that might all compare equal to $e_t$. First
observe that items used for pointers are always located in positions
that are distinguishable from items placed in positions used
as \emph{actual} items. Thus we do not need to worry about confusing
those two. Similarly, the ``empty'' spaces in $T$ are also located in
positions that are distinguishable from pointers. Now we only need to
be able to differentiate ``empty'' spaces and occupied spaces where
the keys both compare equal to $e_t$. Letting items (i.e. pairs) used
as empty spaces encode $1$, and the ``occupied'' spaces encode $0$,
empty spaces and occupied spaces become differentiable as
well. Encoding that bit is possible, since they are not used for
encoding anything else.

Since many elements could now be identical we need to decide whether
there are enough distinct elements to have a meaningful structure. As
an invariant we have that if the two elements in the pair $e_t =
(e_{t,1}, e_{t,2})$ are equal then there are not enough elements to
make $\Omega(\log n)$ pairs of distinct elements. The $O(\log n)$
elements that are different from the majority are then stored at the
end of the array. After every $\log n$th insertion it is easy to check
if there are now sufficient elements to make $\ge c \log n$ pairs for
some appropriately large and fixed $c$. When that happens, the
structure in \rfig{memory_layout_identical} is formed, and $e_t$ must
now contain two distinct elements, with the lesser being the threshold
key. Note also, that while $e_{t,1} = e_{t,2}$ an \proc{ExtractMin}
procedure simply needs to scan the last $< c\log n$ elements and
possibly make one swap to return the minimum and fill the empty index.

\paragraph{Insert}

The structure $B'$ is a list of single elements which functions as an
insertion buffer, that is elements are simply appended to $B'$ when
inserted. Whenever $n \bmod \log n = 0$ a procedure making pairs is
run: At this point we have time to decode $p_L$, and up to $O(\log n)$
new pairs can be made using $L$ and $B'$. To make pairs $B'$ is read,
all elements in $B'$ that are equal to elements in $L$, are put after
$L$, the rest of the elements in $B'$ are used to create pairs using
one element from $L$ and one element from $B'$. If there are more
elements in $B'$, they can be used to make pairs on their own. These
pairs are then inserted into PQ. To make room for the newly inserted
pairs, $L$ might have to move right and we might have to update
$p_L$. Since $p_L$ is a Gray coded pointer, we only need as many bit
changes as there are pairs inserted in PQ, ensuring $O(1)$ amortized
moves. Note that the size of PQ is now the value of $p_L$, which means
all computations involving $n$ for PQ should use $p_L$ instead.

\paragraph{ExtractMin}

To extract the minimum a search for the minimum is performed in PQ,
$B'$ and $L$. If the minimum is in PQ, it is extracted and the other
element in the pair is put at the end of $B'$. Now there are two empty
positions before $L$, so the last two elements of $L$ are put there,
and the last two elements of $B'$ are put in those positions. Note
$p_L$ also needs to be decremented. If the minimum is in $B'$, it is
swapped with the element at position~$n$, and returned. If the minimum
is in $L$, the last element of $L$ is swapped with the element at
position $n$, and it is returned.

\paragraph{Analysis}

Firstly observe that if we can prove the producing of pairs uses
amortized $O(1)$ moves for \proc{Insert} and \proc{ExtractMin} and
$O(1)$ and $O(\log n)$ time respectively, then the rest of the
analysis from \rsec{sec:amortized_analysis} carries through. We first
analyze \proc{Insert} and then \proc{ExtractMin}.

For \proc{Insert} there are two variations: either append elements to
$B'$ or clean up $B'$ and insert into PQ. Cleaning up $B'$ and
inserting into PQ is expensive and we amortize it over the cheap
operations. Each operation that just appends to $B'$ costs $O(1)$ time
and moves. Cleaning up $B'$ requires decoding $p_L$, scanning $B'$ and
inserting $O(\log n)$ elements in PQ. Note that between two clean-ups
either $O(\log n)$ elements have been inserted or there has been at
least one \proc{ExtractMin}, so we charge the time there. Since each
insertion into PQ takes $O(1)$ time and moves amortized we get the
same bound when performing those insertions. The cost of reading $p_L$
is $O(\log n)$, but since we are guaranteed that either $\Omega(\log
n)$ insertions have occurred or at least one \proc{ExtractMin}
operation we can amortize the reading time.

\end{document}